\DeclareSymbolFont{symbolsC}{U}{txsyc}{m}{n}
\DeclareMathSymbol{\coloneqq}{\mathrel}{symbolsC}{66}
\newcommand\remove[1]{}
\newcommand{\lca}{\mathrm{\bf lca}}
\DeclareMathOperator{\diam}{diam}
\DeclareMathOperator{\Processed}{\mathbf{Processed}}
\newtheorem{theorem}{Theorem}
\newtheorem{lemma}{Lemma}[section]
\theoremstyle{definition}
\newtheorem{definition}[theorem]{Definition}
\begin{document}

\title{Fast C-K-R Partitions of Sparse Graphs\thanks{\noindent M.~Mendel was partially supported by an ISF grant no. 221/07, a BSF grant no. 2006009, and
a gift from Cisco research center. 
This work is part of the M.Sc. thesis of C.~Schwob prepared in the Computer Science Division of the Open University of Israel.
}
}

\author{Manor Mendel\\
Computer Science Division\\
The Open University of Israel\\
\texttt{mendelma@gmail.com} 
\and
{Chaya Schwob}\\
Computer Science Division\\
The Open University of Israel\\
\texttt{cschwob@nds.com}
}

\date{}

\maketitle

\begin{abstract}
We present fast algorithms for constructing probabilistic embeddings and approximate distance oracles in sparse graphs. The main ingredient is a fast algorithm for sampling the probabilistic partitions of Calinescu, Karloff, and Rabani in sparse graphs. 
\end{abstract}

\section{Introduction}\label{section:introduction}

Metric decompositions aim to partition the points of a metric space
into blocks such that close-by points tend to be placed in the same
block while distant pairs of points in different blocks.
For most metric spaces, no straightforward interpretation of
these goals exists.

One successful compromise is the notion of \emph{probabilistic
partition.}
{A } $\Delta$-bounded probabilistic partition is a probability distribution over partitions of the metric space, 
such that in every partition in the distribution, the diameters of the blocks are at most $\Delta$, while ``close-by" pairs of points  are in the same block with ``high" (or at least ``non-negligible") probability.

Probabilistic partitions first appeared, to the best of our knowledge,%
\footnote{Closely related notions of partitions appeared before, e.g. in~\cite{LR99}.}
in a paper of Linial and Saks~\cite{LS94}, 
and publicized in the work of Bartal~\cite{bartal96} on probabilistic embeddings.
Calinescu, Karloff and Rabani~\cite{calinescu01} introduced the following probabilistic partition of metric spaces which we describe as an algorithm that samples a partition from the probability distribution.

\begin{algorithm}[!ht]
\caption{CKR-Partition} \label{alg:ckrpar}
\begin{algorithmic}
\REQUIRE A finite metric space $\left(X, \rho\right)$,  scale $\Delta>0$
\ENSURE Partition $P$ of $X$
\STATE $\pi \coloneqq$ random permutation of $X$
\STATE $R \coloneqq$ random  number in $\left[\frac{\Delta}{4},\frac{\Delta}{2}\right]$ 
\FOR{$i=1$ to $|X|$}
\STATE $C_i\coloneqq \{y\in X: \rho(y,x_{\pi(i)}) \le R \} \setminus \bigcup_{j=1}^{i-1}C_j $
\ENDFOR
\RETURN $P\coloneqq \left\{C_1,\ldots,C_{|X|}\right\}\setminus \left\{\emptyset\right\}$
\end{algorithmic}
\end{algorithm}

We call the probabilistic partition $P$ sampled by Algorithm~\ref{alg:ckrpar}, \emph{$\Delta$-bounded CKR partition}.
Na\"{\i}ve implementations of Algorithm~\ref{alg:ckrpar} take $\Omega(n^2)$ time for $n$-point metric spaces. 
It seems hard to break the $\Omega(n^2)$ barrier  on the running time in general finite metric spaces.
However, in many situations, the metric spaces we deal with come from the shortest-path metric on relatively sparse graphs.
In those cases we can do better, as the following theorem shows.

\begin{theorem} \label{thm:fast-ckr}
 Suppose we are given a positive number $\Delta>0$ and an undirected
  graph with positive edge weights $G=(X,E,\omega)$. Suppose $G$ has
  $n$ vertices and $m$ edges, and let $\rho$ denote the shortest-path metric in $G$.
  One can sample a $\Delta$-bounded CKR partition of $(X,\rho)$ in expected
  $O(m\log n+n \log^2 n)$ time.
  \end{theorem}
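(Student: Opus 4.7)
The plan is to implement CKR-Partition as a sequence of truncated Dijkstras, one per center $c_i := x_{\pi(i)}$ in the order given by $\pi$, gated by a global auxiliary array $\dmin$ (initialized to $+\infty$) that records, for each $v \in X$, the smallest distance at which any previous Dijkstra has settled $v$. Inside $c_i$'s Dijkstra, when relaxing an edge $(u,v)$ with tentative distance $d' := d(c_i,u) + \omega(u,v)$, the relaxation is dropped unless both $d' \le R$ and $d' < \dmin[v]$; when $v$ is popped at distance $d$, set $\dmin[v] \leftarrow d$, and if $v$ is still unassigned, place it into the cluster of $c_i$. Each Dijkstra uses its own Fibonacci heap ($O(1)$ amortized insert/decrease-key, $O(\log n)$ amortized extract-min); the per-Dijkstra $\Theta(n)$ cost of zeroing tentative-distance arrays is avoided by timestamp-based lazy initialization.

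For correctness I would prove by induction on $i$ the following claim: if $\rho(c_i,v) \le R$ and $\rho(c_i,v) < \dmin[v]$ at the moment $c_i$'s Dijkstra begins, then $c_i$ eventually pops $v$ at distance exactly $\rho(c_i,v)$. Granting this, the first $c_{i^*}$ with $\rho(c_{i^*},v) \le R$ settles $v$ (at that moment $\dmin[v]=+\infty$), which is exactly the CKR rule. The inductive step fixes a shortest path $c_i = u_0, u_1, \ldots, u_k = v$ and shows $\dmin[u_j] > \rho(c_i,u_j)$ for every $j$: if some earlier $c_{i'}$ had settled $u_j$ at distance at most $\rho(c_i,u_j)$, the triangle inequality gives
\[
\rho(c_{i'},v) \le \rho(c_{i'},u_j) + \rho(u_j,v) \le \rho(c_i,u_j) + \bigl(\rho(c_i,v)-\rho(c_i,u_j)\bigr) = \rho(c_i,v) < \dmin[v],
\]
and since $\rho(c_{i'},v) \le \rho(c_i,v) \le R$, the inductive hypothesis applied to $c_{i'}$ would force $c_{i'}$ to have popped $v$ at distance $\rho(c_{i'},v) \le \rho(c_i,v)$, making $\dmin[v]\le\rho(c_i,v)$, a contradiction.

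For the expected running time the critical quantity is the total number of PQ insertions. For a fixed $v$, the centers that insert $v$ are exactly those $c_i$ for which $\rho(c_i,v)$ is a left-to-right minimum in the sequence $\bigl(\rho(c_i,v)\bigr)_{i \in I_v}$ listed in $\pi$-order, where $I_v := \{i : \rho(c_i,v) \le R\}$. Because $\pi$ is uniformly random on $X$, conditioned on $R$ and $I_v$ this sequence is a uniformly random ordering of $\{\rho(c,v) : c \in I_v\}$, whose expected number of left-to-right minima is at most $H_{|I_v|} = O(\log n)$. Summing over $v$ gives $\E[\#\text{insertions}] = O(n \log n)$. Since each insertion triggers exactly one extract-min and then a scan of the popped vertex's $\deg(v)$ incident edges, there are $O(n \log n)$ expected extract-mins and $\sum_v \deg(v)\cdot O(\log n) = O(m \log n)$ expected edge relaxations. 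Charging $O(\log n)$ per extract-min and $O(1)$ per insert, decrease-key, and relaxation check yields $O(m \log n + n \log^2 n)$ expected total time.

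The main obstacle I expect is the correctness argument: one must verify that the aggressive $\dmin$-gating never prunes the true shortest path from the current center to a vertex it genuinely needs to settle, and the two-level induction above (outer on $i$, implicit inner over the path vertices $u_j$) together with the triangle-inequality step is what makes this go through. A secondary point is to note that the $H_{|I_v|}$ bound is indifferent to the extra randomness of $R \in [\Delta/4,\Delta/2]$, since it holds pointwise in $R$.
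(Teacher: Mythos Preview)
Your proposal is correct and takes essentially the same approach as the paper: a sequence of truncated Dijkstras gated by a globally-maintained minimum-distance array, with correctness via the triangle-inequality argument along a shortest path and the running time via the harmonic bound on left-to-right minima under a uniformly random permutation. The only cosmetic differences are that the paper lets a single array $\delta$ play both the role of your $\dmin$ and the per-run tentative distances (so no lazy initialization is needed), and the paper bounds the number of record-minima over the full sequence $(\rho(\pi(j),v))_{j\le i}$ rather than restricting to $I_v$; neither affects the argument or the bound.
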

The sampling will be accomplished by Algorithm~\ref{alg:graphckr} (Section~\ref{section:CKRrandompartition}).
  
CKR partitions have found many algorithmic (as well as mathematical) applications, 
and we mention only few of them here.
They were introduced as part of an approximation algorithm to the $0$-extension problem~\cite{calinescu01,harrelsonrao03}.
Fakcharoenphol, Rao and Talwar~\cite{raotalwar04} used them to obtain an asymptotically
tight probabilistic embedding into trees, which we call FRT-embedding. Probabilistic embeddings are used in many of the best known approximation and online algorithms as a reduction step from general metrics into tree metrics. 
Mendel and Naor~\cite{mendel06} showed that FRT-embedding possesses  a stronger embedding property, which they called ``maximum gradient embedding".
Recently, R{\"a}cke~\cite{racke2008} used FRT-embeddings to obtain hierarchical decompositions for
congestion minimization in networks, and used them to give an $O(\log n)$
approximation algorithm for the minimum bisection problem and an $O(\log n)$ competitive online algorithm for the oblivious routing problem.
Krauthgamer et.~al~\cite{Krauthgamer2004} used CKR-partitions to give a new proof of Bourgain's embedding theorem.
Mendel and Naor~\cite{mendelnaor07} used them to obtain an asymptotically tight metric Ramsey theorem and approximate distance oracles.

The improved running time of the sampling of CKR partitions may improve the running time of many of their applications. In order to keep the paper short we work out the details of only two (related) applications of CKR  partitions: FRT-embeddings, and approximate distance oracles based on CKR-partitions.

\subsubsection*{Probabilistic embedding into ultrametrics~\cite{AKPW,bartal96}.} 
An ultrametric $\nu$ on $X$ is a metric
which satisfies $\nu(x,z)\le \max\{\nu(x,y), \nu(y,z)\}$, for every $x,y,z\in X$.
A probabilistic embedding of a metric space $(X,\rho)$ into ultrametrics with distortion $D$ is a probability distribution
$\Pi$ over ultramerics $\nu$ on $X$ such that
\begin{enumerate}
\item For every $x,y\in X$, $\Pr_{\Pi}[\nu(x,y)\ge \rho(x,y)]=1$.
\item For every $x,y\in X$, $\mathbb E_{\Pi}[\nu(x,y)] \le D \cdot \rho(x,y)$.
\end{enumerate}

FRT-embedding is a probabilistic embedding into ultrametrics with distortion $O(\log n)$ for every $n$-point metric space~\cite{raotalwar04}.
This bound is asymptotically tight for certain classes of finite metric spaces, such as graphs of high girth\cite{bartal96}, grids~\cite{AKPW}, and expanders~\cite{LLR}.

\subsubsection*{Approximate distance oracles.}
An approximate distance oracle is a data structure with ``compact" ($o(n^2)$) storage that answers (approximate) distance queries in a given $n$-point metric space in constant time. A simple counting argument over
all bi-partite graphs shows that exact, and even $2.99$ approximation is impossible when the storage is $o(n^2)$. The history of this problem is nicely summarized in~\cite{thorupzwick01}. In particular, Thorup and Zwick~\cite{thorupzwick01} gave an asymptotically tight trade-off between the approximation and the storage%
\footnote{The lower bound on the approximation assumes a conjecture of Erd\"os about the number of edges possible in graph with a given number of vertices and a given girth, see~~\cite{thorupzwick01}.}: 
For every $k\in \mathbb N$, they constructed $(2k-1)$-approximate distance oracle requiring $O(k n^{1+1/k})$ storage, and answering queries in $O(k)$ time. Recently Mendel and Naor~\cite{mendelnaor07} presented  different approximate distance oracles, based on CKR partitions. While those oracles do not give   optimal approximation/storage trade-off,%
\footnote{As reported in~\cite{mendelnaor07}, oracles of size $O(n^{1+1/k})$ support approximation factor of $128k$ in the queries. While the constant $128$ can be reduced by optimizing the parameters in the construction, it is unlikely to get below 8.}
they answer distance queries in an absolute constant time, regardless of the approximation parameter. 

\begin{theorem} \label{thm:application-1}
Let $G=(X,E,\omega)$ be an $n$-vertex weighted graph  with $m$ edges, and let $\rho$ be the shortest-path metric on $X$. Then
\begin{enumerate}
\item It is possible to sample from FRT-embedding of $(X,\rho)$ in $O(m \log^3 n)$ expected time.
\item It is possible to construct in $O(m n^{1/k} \log^3 n)$ expected time an $O(k)$-approximate distance oracle for $(X,\rho)$ based on CKR partitions whose storage is $O(n^{1+1/k})$.
\end{enumerate}
\end{theorem}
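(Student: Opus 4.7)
Both parts of the theorem follow by invoking Theorem~\ref{thm:fast-ckr} at a geometric sequence of scales and feeding the resulting partitions into the known constructions of Fakcharoenphol, Rao and Talwar~\cite{raotalwar04} and of Mendel and Naor~\cite{mendelnaor07}. I first assume, after a standard MST-based reduction, that the aspect ratio $\dmax/\dmin$ is $\poly(n)$, so that the number of dyadic scales to handle is $O(\log n)$.

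For part~(1), I follow the FRT template: draw a single uniformly random permutation $\pi$ of $X$ and a single $\beta\in[1,2]$, and for each scale $\Delta_i=2^i\beta\dmin$ in the valid range invoke Algorithm~\ref{alg:graphckr} with that $\pi$ and the radius derived from $\beta$. Because Algorithm~\ref{alg:graphckr} samples from exactly the CKR distribution and $\pi$ is shared across scales, the resulting sequence of partitions is laminar, so a single bottom-up sweep assembles the FRT ultrametric in $O(n\log n)$ additional time. With $O(\log n)$ scales and each scale costing expected $O(m\log n+n\log^2 n)$ by Theorem~\ref{thm:fast-ckr}, the total is $O(m\log^2 n+n\log^3 n)=O(m\log^3 n)$, using $m\ge n-1$ for connected $G$ (disconnected graphs are handled component-wise).

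For part~(2), I plug the fast sampler into the Mendel--Naor CKR-based oracle of~\cite{mendelnaor07}. Their oracle is assembled from $O(n^{1/k}\log n)$ independent CKR partitions (over all scales and a small number of independent trials) together with combinatorial post-processing whose cost is dominated by the sampling. Substituting Theorem~\ref{thm:fast-ckr} for each sample yields expected construction time $O(n^{1/k}\log n)\cdot O(m\log n+n\log^2 n)=O(mn^{1/k}\log^3 n)$, again using connectivity of $G$. The $O(n^{1+1/k})$ storage and $O(k)$ approximation carry over unchanged, since Algorithm~\ref{alg:graphckr} produces the identical distribution to Algorithm~\ref{alg:ckrpar} so the Mendel--Naor distortion analysis applies verbatim.

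The step warranting the most care is the initial aspect-ratio reduction: a naive approach contributes an extra $\log(\dmax/\dmin)$ factor, which is not $O(\log n)$ for arbitrary edge weights and would spoil both bounds. Everything else is routine bookkeeping on top of already-established analyses, with the identity-in-distribution property of Algorithm~\ref{alg:graphckr} being the only new ingredient needed from this paper.
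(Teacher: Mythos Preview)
Your proposal for part~(1) contains a concrete error: whole-space CKR partitions with a shared permutation $\pi$ and nested radii are \emph{not} laminar. Take four points $a,b,c,d$ on a line at positions $0,1,2,3$ with $\pi=(d,a,b,c)$. At radius $R=1.5$ the partition is $\{\{a,b\},\{c,d\}\}$, while at $R=2.5$ it is $\{\{a\},\{b,c,d\}\}$; the block $\{a,b\}$ at the finer scale is not contained in any block at the coarser scale. The original FRT construction obtains laminarity by recursively partitioning \emph{within} each block of the previous level, not by running CKR on the whole space at every scale; the paper instead samples \emph{independent} CKR partitions at each scale and takes their common refinement (see the footnote in the proof of part~(1)). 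Your hybrid---shared $\pi$, whole space---is neither of these, and your ``single bottom-up sweep'' would not produce a tree consistent with the partitions you actually sampled.

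The second, and in fact central, gap is the one you flag yourself. There is no black-box ``standard MST-based reduction'' that lets you assume $O(\log n)$ scales and then invoke Theorem~\ref{thm:fast-ckr} once per scale on the original graph $G$. Section~\ref{section:optimization} is devoted precisely to this: it builds scale-dependent quotient graphs $G_j$ (contracting clusters of small diameter, discarding heavy edges), proves that their \emph{total} size across all relevant scales is $O(m\log n)$ (Lemma~\ref{lem:G_j}), shows that CKR partitions of the quotients pull back to partitions of $(X,\rho)$ with the correct padding (Lemma~\ref{lem:aspectratio}), and then assembles the common refinement into the compact tree $\mathcal P$. The $O(m\log^3 n)$ bound comes from running Algorithm~\ref{alg:graphckr} on these quotient graphs, not from an a~priori $O(\log n)$ bound on the number of scales. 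For part~(2), the paper must additionally explain how to compute the set of $\beta$-padded points efficiently (via an auxiliary Dijkstra from a synthetic source) and how to run the hierarchical sampler on the shrinking subsets $X_i\subsetneq X$; your phrase ``combinatorial post-processing whose cost is dominated by the sampling'' hides work that the paper carries out explicitly.
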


For approximate distance oracles, it is also possible to improve the na\"{\i}ve construction time even when the metric is given as distance matrix, by first constructing a spanner of the metric with $o(n^{2})$ edges, and then use the fast CKR partitions for sparse graphs on that spanner.

\begin{theorem} \label{thm:hierachical-part}
For $n$-point metric spaces given as distance matrix, it is possible to construct $O(k)$-approximate distance oracle based on CKR partitions {whose storage is} $O(n^{1+1/k})$, in $O(n^2)$ expected time.
\end{theorem}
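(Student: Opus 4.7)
The plan is a clean reduction to Theorem~\ref{thm:application-1}(2): first sparsify the metric into a low-stretch spanner of subquadratic size, then run the fast CKR-based oracle construction on this spanner. Concretely, fix a constant $t\ge 2$ (say $t=4$) and use a Thorup--Zwick style construction~\cite{thorupzwick01} to build, in $O(tn^{2})=O(n^{2})$ time, a $(2t-1)$-spanner $H=(X,E_H,\omega)$ of $(X,\rho)$ with $|E_H|=O(n^{1+1/t})$. Let $\rho_H$ denote its shortest-path metric; by the spanner property, $\rho(x,y)\le \rho_H(x,y)\le (2t-1)\,\rho(x,y)$ for every $x,y\in X$.

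Next apply Theorem~\ref{thm:application-1}(2) to $H$, producing an $O(k)$-approximate CKR-based distance oracle $\mathcal{D}$ for $(X,\rho_H)$ of storage $O(n^{1+1/k})$. The bound in that theorem yields construction time $O(|E_H|\cdot n^{1/k}\log^3 n)=O(n^{1+1/t+1/k}\log^3 n)$; with $t=4$ this is $O(n^{5/4+1/k}\log^3 n)$, which is $o(n^{2})$ for every $k\ge 2$ (the corner case $k=1$ is trivial, since then $n^{1+1/k}=n^{2}$ and one can simply store the distance matrix). Because $\mathcal{D}(x,y)$ approximates $\rho_H(x,y)$ within factor $O(k)$ and $\rho_H$ itself is a $(2t-1)$-approximation of $\rho$, we conclude that $\mathcal{D}$ is an $O((2t-1)k)=O(k)$-approximate oracle for the original metric $\rho$, with storage $O(n^{1+1/k})$ and total expected construction time $O(n^{2})$.

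The main obstacle I anticipate is ensuring that the spanner step truly runs in $O(n^{2})$ time when only a distance matrix is available. The Thorup--Zwick procedure samples a hierarchy of vertex subsets and, for each level, performs a Dijkstra-like computation together with the addition of ``bunch'' edges; on the complete weighted graph defined by the metric this must be implemented in $O(n^{2})$ rather than the off-the-shelf $O(mn)=O(n^{3})$ bound, which is done by replacing the priority queue with linear scans over the $O(n)$ currently relevant entries at each step. Granted this standard adaptation, the rest of the argument is a black-box composition of the spanner with Theorem~\ref{thm:application-1}(2), and no new analysis of CKR partitions is needed.
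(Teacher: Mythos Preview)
Your proposal is correct and follows essentially the same route as the paper: sparsify via a constant-stretch spanner with $O(n^{1+1/t})$ edges, then invoke Theorem~\ref{thm:application-1}(2) on the spanner. The paper sidesteps your anticipated obstacle about achieving $O(n^{2})$ spanner time on a distance matrix by citing the Baswana--Sen construction~\cite{baswanasen03} (Theorem~\ref{thm:spanner}), which runs in $O(km)$ expected time on any weighted graph and hence in $O(n^{2})$ time on the complete graph defined by the matrix; no adaptation of Thorup--Zwick is needed.
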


We remark that a different probabilistic partition, developed by Bartal~\cite{Bartal04,Bartal05} and Abraham et.~al.~\cite{ABN06}, have properties similar to  (and even stronger than) CKR partitions. 
However, we do not see an easy way to quickly obtain a sample from this distribution when the graph is sparse.

\subsection*{Further Results}
The second named author presents in~\cite{mythesis} an efficient PRAM algorithm for sampling CKR partitions and constructing approximate distance oracles in weighted graphs. The running time of the algorithm is $\mathrm{polylog}(n)$ and the total work is $O(m\,\mathrm{polylog}(n))$.

\subsection*{Outline of the paper.}
After setting up in Section~\ref{section:preliminaries} the notation and reviewing
the properties of CKR partitions, we prove Theorem~\ref{thm:fast-ckr} in Section~\ref{section:CKRrandompartition}.

In many applications (and in particular, probabilistic embeddings and approximate distance oracles), probabilistic partitions are applied hierarchically, using an exponentially decreasing series of scales. 
This na\"{\i}vly implies an added $O(\log \Phi)$ factor in the running time, where $\Phi$ is the spread\footnote{The spread is the ratio between the diameter and the smallest non-zero distance in the metric} of the metric.
There is a standard technique that converts this $\log \Phi$ factor into a $\log n$ factor. However, we are not aware of a concrete implementation that satisfies the efficiency requirements needed in this paper. We therefore sketch the details of this technical step in Section~\ref{section:optimization}.
The specific applications examined in this paper, Theorem~\ref{thm:application-1} and Theorem~\ref{thm:hierachical-part}, are discussed in Section~\ref{section:applications}.

\section{Preliminaries}\label{section:preliminaries}

For simplicity of the presentation, the model of computation we assume is a unit-cost, real-word RAM machine.
In this model words can hold real numbers and arithmetic, comparison, and truncation operations take unit time. 
Our algorithms, however, do not take advantage of the unrealistic power of this model, and can also be 
presented in a more realistic computational models such as the "unit cost floating-point word RAM model" (cf.~\cite[Sec.~2.2]{harpeledmendel06}).

The diameter of a finite subset {$Y\subseteq (X,\rho)$} is defined as $\diam(Y)=\max\{\rho(x,y):\; x,y\in Y\}$.
For simplicity of the presentation, we assume that the given finite metric $(X,\rho)$ has minimum non-zero distance 1, and diameter $\diam(X)=\Phi$.

The (closed) ball around $x$ at radius $r$ is defined as $B_\rho(x,r)=\{y\in X:\; \rho(x,y)\le r\}$. When $\rho$ is clear from the context we may omit it from the notation. Given a partition $P$ of $X$, and $x\in X$, we denote
by $P(x)$ the block of $P$ which contains $x$.

$\Delta$-bounded CKR partitions have an obvious upper bound of $\Delta$ on the diameter of the blocks in the partition.
The following is the padding property they enjoy.
\begin{lemma}[\cite{raotalwar04,mendelnaor07}]  \label{lem:improvedCKR}
Let $ P$ be a $\Delta$-bounded CKR partition  of the metric $(X,\rho)$. Then, for every $x\in X$, 
and $t\le \Delta/8$,
\begin{equation}
\label{eq:betterCKR} 
\Pr_{P\sim \mathcal P}\left[B\left(x,t\right)\subseteq 
P(x)\right]\ge
\left(\frac{\left|B_X(x,\Delta/8)\right|}{\left|B_X(x,\Delta)\right|}\right)^{\frac{16t}{\Delta}}.
\end{equation}
\end{lemma}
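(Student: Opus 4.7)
The plan is to condition on the random radius $R$, reduce via the triangle inequality to a clean sufficient event whose conditional probability is a ratio of ball cardinalities, and then convert the resulting average over $R$ into the claimed exponential form via Jensen's inequality.

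First I would exploit the triangle inequality: if a center $w$ satisfies $\rho(x,w)\le R-t$, then $B(w,R)\supseteq B(x,t)$, while if $\rho(x,w)>R+t$, then $B(w,R)\cap B(x,t)=\emptyset$. Consequently, whenever the $\pi$-first center of $B(x,R+t)$ happens to lie in $B(x,R-t)$, that center is the earliest to touch any point of $B(x,t)$ and it swallows the entire ball, giving $B(x,t)\subseteq P(x)$. Conditioning on $R$, the uniform symmetry of $\pi$ makes the first element of $B(x,R+t)$ uniform in $B(x,R+t)$, so this sufficient event has conditional probability $|B(x,R-t)|/|B(x,R+t)|$. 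Averaging over $R\sim\mathrm{Unif}[\Delta/4,\Delta/2]$ yields
\[
\Pr[B(x,t)\subseteq P(x)]\;\ge\; \frac{4}{\Delta}\int_{\Delta/4}^{\Delta/2}\frac{|B(x,R-t)|}{|B(x,R+t)|}\,dR.
\]

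To obtain the exponential, I would set $g(r)\coloneqq\ln|B(x,r)|$, so the integrand equals $e^{-(g(R+t)-g(R-t))}$. Convexity of $y\mapsto e^{-y}$ and Jensen's inequality reduce matters to upper bounding $\E_R[g(R+t)-g(R-t)]$. A change of variables rewrites this as $\tfrac{4}{\Delta}\bigl(\int_{\Delta/2-t}^{\Delta/2+t}g-\int_{\Delta/4-t}^{\Delta/4+t}g\bigr)$ after cancelling the overlap of the two shifted integrals. Monotonicity of $g$, together with $t\le\Delta/8$ (which ensures $\Delta/4-t\ge\Delta/8$ and $\Delta/2+t\le\Delta$), bounds the head by $2t\cdot g(\Delta)$ and the tail from below by $2t\cdot g(\Delta/8)$, giving $\E_R[g(R+t)-g(R-t)]\le (8t/\Delta)\ln(|B(x,\Delta)|/|B(x,\Delta/8)|)$. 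Hence the padding probability is at least $(|B(x,\Delta/8)|/|B(x,\Delta)|)^{8t/\Delta}$; since the base is at most $1$, the claimed (weaker) exponent $16t/\Delta$ follows a fortiori.

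The main obstacle is the head/tail cancellation and the matching monotonicity bounds---this is precisely where the hypothesis $t\le\Delta/8$ enters, via the interval $[\Delta/4,\Delta/2]$ being wider than $2t$ so that the two shifted copies of $g$ actually overlap. Everything else is routine bookkeeping.
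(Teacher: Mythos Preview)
Your argument is correct and is in fact the standard proof of this padding estimate. Note, however, that the paper does not supply its own proof of this lemma at all: it is stated with a citation to~\cite{raotalwar04,mendelnaor07} and used as a black box. So there is no ``paper's proof'' to compare against; your write-up is essentially a reconstruction of the argument in those references (the sufficient event via the first $\pi$-center in $B(x,R+t)$, followed by Jensen on $r\mapsto\ln|B(x,r)|$ and the telescoping of the shifted integrals).

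One small remark: as you observe, your computation actually yields the sharper exponent $8t/\Delta$ rather than the stated $16t/\Delta$; the lemma as quoted in the paper simply records a weaker (and still sufficient) constant.
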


\medskip 

In this paper we define hierarchical partition of a metric space $(X,\rho)$ as a sequence of $\lceil \log_8 \Phi \rceil +2$ partitions
$P_{-1},P_0,\ldots , P_{\lceil \log_8 \Phi \rceil}$ such that $P_i$ is a partition of $X$ at scale $8^i$,
and $P_{i}$ is a refinement of $P_j$ when $i\le j$, i.e., for every $x\in X$, $P_i(x)\subseteq P_j(x)$.
Given a sequence $(Q_j)_{j\ge -1}$ of partitions, where $Q_j$ is  $8^j$-bounded partition of $X$, the common refinement of $(Q_j)_{j}$ is a hierarchical {partition}
$(P_j)_{j\ge -1}$ in which $P_j=\{\bigcap_{\ell \ge j} C_\ell:\; C_\ell\in Q_\ell\}$

By sampling stochastically independent CKR partitions at the different scales and then taking their common refinement, we obtain the following result.

\begin{lemma}[\cite{mendelnaor07}] \label{lem:ckr-hierarchical-part}
Fix a finite metric space $(X,\rho)$. Then there exists (efficiently {sampleable}) probability distribution $\mathcal H$ over hierarchical partitions such that for every $x\in X$, and every $0<\beta<1/8$,
\[
\Pr_{(P_{-1},\ldots, P_{\lceil \log_8 \Phi \rceil})\sim \mathcal H} \left[\forall\ k\ge -1 ,\ 
B(x,\beta 8^{k})\subseteq P_{k}(x)\right] \ge \left|X\right|^{-16 \beta}.
\]
\end{lemma}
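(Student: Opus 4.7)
The plan is to define $\mathcal{H}$ by independently sampling, at each scale $\ell \in \{-1, 0, \ldots, L\}$ with $L = \lceil \log_8 \Phi \rceil$, an $8^\ell$-bounded CKR partition $Q_\ell$ via Algorithm~\ref{alg:ckrpar}, and then letting $(P_{-1}, \ldots, P_L)$ be their common refinement. My first step would be to rewrite the event of interest in terms of the $Q_\ell$'s. Since $P_k(x) = \bigcap_{\ell \ge k} Q_\ell(x)$ by definition of the common refinement, the event $\{B(x, \beta 8^k) \subseteq P_k(x) \text{ for all } k\}$ is equivalent to $\{B(x, \beta 8^k) \subseteq Q_\ell(x) \text{ for all } \ell \ge k \ge -1\}$. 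Because the balls $B(x, \beta 8^k)$ grow with $k$, for each fixed $\ell$ the conjunction over $k \le \ell$ collapses to the single strongest condition at $k = \ell$, so the full event simplifies to $\bigcap_{\ell = -1}^{L} \{B(x, \beta 8^\ell) \subseteq Q_\ell(x)\}$.

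Next, I would exploit stochastic independence of the $Q_\ell$'s to factor this probability as a product over $\ell$, and bound each factor using Lemma~\ref{lem:improvedCKR} with $\Delta = 8^\ell$ and $t = \beta 8^\ell$. The hypothesis $t \le \Delta/8$ holds uniformly because $\beta < 1/8$, so
\[
\Pr\bigl[B(x, \beta 8^\ell) \subseteq Q_\ell(x)\bigr] \;\ge\; \left(\frac{|B(x, 8^{\ell-1})|}{|B(x, 8^\ell)|}\right)^{16\beta}.
\]
Multiplying these bounds over $\ell = -1, \ldots, L$, the numerators and denominators telescope to
\[
\prod_{\ell=-1}^{L}\left(\frac{|B(x, 8^{\ell-1})|}{|B(x, 8^\ell)|}\right)^{16\beta} \;=\; \left(\frac{|B(x, 8^{-2})|}{|B(x, 8^L)|}\right)^{16\beta}.
\]

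To finish, I would invoke the normalization from Section~\ref{section:preliminaries}: the minimum non-zero distance is $1$, forcing $|B(x, 8^{-2})| = 1$, while $8^L \ge \Phi = \diam(X)$ forces $B(x, 8^L) = X$ and hence $|B(x, 8^L)| = |X|$. This yields the target bound $|X|^{-16\beta}$. I do not anticipate a substantive obstacle: the argument is just independence across scales combined with a clean telescoping of the per-scale padding bound. The only mild care needed is in the boundary cases, which is precisely why the hierarchy extends one scale below $0$ and one scale above $\log_8 \Phi$: it guarantees both that the smallest relevant ball is the singleton $\{x\}$ and that the largest relevant ball is all of $X$, making the telescoped fraction exactly $1/|X|$.
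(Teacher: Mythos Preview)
Your proposal is correct and follows precisely the construction the paper sketches in the sentence preceding the lemma: independent CKR partitions at each scale, followed by the common refinement. The paper does not spell out the proof (it defers to~\cite{mendelnaor07}), but your reduction to the per-scale events, application of Lemma~\ref{lem:improvedCKR}, and telescoping of the ball-ratio factors is exactly the intended argument.
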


A finite ultrametric $(X,\nu)$ can be represented by {a} tree as follows.
\begin{definition} An ultrametric tree $\left(T, \Gamma\right)$  is a metric space whose elements are the leaves of a rooted finite tree $T$. Associated with every vertex $u\in T$ is a label $\Gamma\left(u\right)\geq 0$ such that $\Gamma\left(u\right) = 0$ iff $u$ is a leaf of $T$. If a vertex $u$ is a child of a vertex $v$ then $\Gamma\left(u\right)\leq \Gamma\left(v\right)$ . The distance between two leaves $x,y\in T$ is defined 
as $\Gamma\left(\lca\left(x,y\right)\right)$, where $\lca\left(x,y\right)$ is the least common ancestor of $x$ and $y$ in T. 
\end{definition}
Every finite ultrametric can be represented by {an} ultrametric tree, and vice versa: the metric on ultrametric tree is a finite ultrametric. 
Hierarchical partition $\{P_k\}_{k=-1}^{\lceil \lg \phi \rceil}$ of $(X, \rho)$ naturally corresponds to an ultrametric  $\nu$ on $X$ where $\nu(x,y) = 8^{\min \left\{ j  : \;  P_j\left(x\right)=P_j\left(y\right) \right\}}$.

\medskip

Let $G=\left(X,E,\omega\right)$ be an undirected positively weighted graph.
Let $\rho: X\times X \to \left[0, \infty\right)$ be the shortest-path metric on $G$.
We denote by $n=|X|$ the number of vertices, and by $m=|E|$ the number of edges. We assume an adjacency list representation of graphs.

The single source shortest paths in weighted undirected graphs problem [USSSP] is used as a subroutine in  our algorithm. Given a weighted graph with $n$ vertices and $m$ edges, Dijkstra's classical USSSP algorithm~\cite{Dijkstra} with source $w$ maintains for each vertex $v$ an upper bound on the distance between $w$ and $v$, $\delta\left( v\right)$. If $\delta\left( v\right)$ has not been assigned yet, it is interpreted as infinite. Initially, we just set $\delta\left( w\right)=0$, and we have no visited vertices. At each iteration, we select an unvisited vertex $u$ with the smallest finite $\delta\left(u\right)$, visit it, and relax all its edges. That is, for each incident edge $\left(u, v\right)\in E$, we set $\delta\left( v\right) \leftarrow \min{\left\{\delta\left( v\right),\delta\left( u\right)+\omega\left(u, v\right)\right\}}$. We continue until no vertex is left unvisited.
Using Fibonacci heaps~\cite{Fredman87} or Bordal's priority queues~\cite{Brodal}, Dijkstra's algorithm is implemented in $O\left(m+n\lg n\right)$ time.

\section{Fast CKR partitions} \label{section:CKRrandompartition}

Given an undirected positively weighted graph $G=\left(X, E,\omega\right)$ with $n$ vertices and $m$ edges  whose shortest path metric is denoted by $\rho$, and $\Delta> 0$,
we show how to implement Algorithm~\ref{alg:ckrpar}  in $O\left(m\lg n+ n\log^2 n\right)$ expected time.

First, we sample a random permutation $\pi$, which can be generated in linear time using several methods, \emph{e.g.,} Knuth Shuffle (see~\cite{bermanklamkin76}). Next, we sample $R$ uniformly%
\footnote{A closer look on the analysis of the CKR partitions (see~\cite{mendelnaor07}) reveals that it is sufficient to sample $R$ from discrete distribution having resolution of $\Delta/c\log n$, and therefore this step can be carried out in a ``realistic" computational model such as the unit cost floating-point word RAM model.}
 in the range $\left[\frac{\Delta}{4},\frac{\Delta}{2}\right]$.

We then use a variant of Dijkstra's algorithm for computing the blocks. The algorithm performs $\left|X\right|$ iterations. In the $i$-th iteration, all vertices in $B_\rho\left(x_{\pi\left(i\right)},R\right)$ not yet assigned to some block are put in $C_i$. 
In order to gain the improved running time of Theorem~\ref{thm:fast-ckr}, 
we change Dijkstra's algorithm to return the distance of a point $v$ from $\pi(i)$
\emph{only if} this distance is smaller then the distance of $v$ from $\pi(j)$ for all $j<i$.

Technically, this is done as follows.
Consider the $i${-}th iteration and let $\delta(v)$ be the variable that holds the Dijkstra's algorithm's current estimate
on the distance between $\pi(i)$ and $v$. In Dijkstra's algorithm $\delta(v)$ is usually initialized to $\infty$ and then gradually decreases until $u$ is extracted from the priority queue, at which point $\delta(v)=\rho(\pi(i),v)$. In the variant of Dijkstra's algorithm used in Algorithm~\ref{alg:graphckr},  $\delta(\cdot)$ are not reinitialized when the value of $i$ is changed. This means that now at the end
of the $(i-1)$-th iteration, $\delta(v)=\min_{j< i} \rho(\pi(j),v)$, 
which in turn means
that an edge $(u,w)$ is relaxed in the $i$-th iteration only when $\pi(i)$ is the closest center to both $u$, and $w$ among $\pi(j)$, $j\le i$. This dramatically reduces the
number of relaxations being done, and does not hurt the correctness of the algorithm.
 The full details are given in Algorithm~\ref{alg:graphckr}.

\begin{algorithm}[ht]
\caption{Graph-CKR-Partition} \label{alg:graphckr}
\begin{algorithmic}[1]
\REQUIRE Graph $G=\left(X, E,\omega\right)$,  scale $\Delta>0$
\ENSURE Partition $P$ of $X$
\STATE Generate random permutation $\pi$ of $X$
\STATE Sample a random $R\in \left[\frac{\Delta}{4},\frac{\Delta}{2}\right]$
\FORALL {$v\in X$}
\STATE $\delta (v)\coloneqq \infty$
\STATE $ P\left(v\right) \coloneqq 0$
\ENDFOR
\FOR [Perform modified Dijkstra's alg starting from $\pi\left(i\right)$] {$i\coloneqq 1$ to $\left|X\right|$} \label{ln:for-i}
\STATE $\delta\left(\pi\left(i\right)\right)\coloneqq 0$
\STATE  $Q \coloneqq \emptyset$  \COMMENT{$Q$ is a priority queue with $\delta$ being the key }
\STATE $w\coloneqq\pi\left(i\right)$
\WHILE[$w$ is visited now] {$\delta\left(w\right)\leq R$} \label{ln:while-w}
\IF {$P\left(w\right)=0$}
\STATE $ P\left(w\right) \coloneqq i$ 
\ENDIF
\FORALL{$u:\ \left(u,w\right)\in E$}
\IF [Relax edges adjacent to $w$] {$\delta(u)>\delta(w)+\omega(u,w)$}
\STATE $\delta\left(u\right) \coloneqq \delta\left( w\right) + \omega\left(u, w\right)$ 
\IF{$u\notin Q$}
  \STATE Insert $u$ into $Q$
\ENDIF
\ENDIF
\ENDFOR
\STATE Extract $w\in Q$ with minimal $\delta\left(w\right)$
\ENDWHILE
\ENDFOR \label{ln:end-for-i}
\RETURN $P$
\end{algorithmic}
\end{algorithm}

\begin{lemma} \label{lem:graph-ckr-is-honest}
After the $i$-th iteration of the loop on lines~\ref{ln:for-i}--\ref{ln:end-for-i} of Algorithm~\ref{alg:graphckr},
\begin{equation} \label{eq:delta-i}
\delta(v)= \begin{cases} \min_{j\le i} \rho(\pi(j),v)\ &  \text{if }\min_{j\le i} \rho(\pi(j),v)\le R, \\
\infty  & \text{otherwise}.
\end{cases}
\end{equation}
\end{lemma}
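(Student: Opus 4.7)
I prove~\eqref{eq:delta-i} by induction on $i$, writing $d_i(v) := \min_{j \le i} \rho(\pi(j), v)$. The base case $i = 0$ (the state prior to the first iteration of the outer loop) is vacuous: every $\delta(v) = \infty$, and the minimum over the empty index set is also taken to be $\infty$.

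For the inductive step, assume the invariant holds after iteration $i-1$. The assignment $\delta(\pi(i)) := 0$ on line~8 matches $d_i(\pi(i)) = 0$, and it remains to analyze what the modified Dijkstra while-loop does to each other $\delta(v)$. The cases split naturally by comparing $d_i(v)$ to $d_{i-1}(v)$. If $d_i(v) = d_{i-1}(v)$, then by the inductive hypothesis $\delta(v)$ already satisfies the claimed formula; any relaxation attempted in iteration~$i$ would replace $\delta(v)$ by a value $\ge \rho(\pi(i), v) \ge d_i(v)$ and would therefore be rejected by the strict-inequality guard on line~16. The interesting case is $d_i(v) = \rho(\pi(i), v) < d_{i-1}(v)$, in which the loop must actually propagate the new shorter distance to $v$.

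The key structural observation for this case is that every shortest $\pi(i)$-to-$v$ path passes only through vertices $u$ with $d_{i-1}(u) > \rho(\pi(i), u)$. Indeed, if some intermediate $w$ on such a path had $d_{i-1}(w) \le \rho(\pi(i), w)$, then
\[
 d_{i-1}(v) \le d_{i-1}(w) + \rho(w, v) \le \rho(\pi(i), w) + \rho(w, v) = \rho(\pi(i), v) = d_i(v),
\]
contradicting $d_i(v) < d_{i-1}(v)$. Along such a path every intermediate $\delta(u)$ initially strictly exceeds $\rho(\pi(i), u)$, so every relaxation succeeds and the modified loop simulates ordinary Dijkstra on the path. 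The prefix distances are all bounded by $\rho(\pi(i), v) \le R$, so none of them triggers the $\delta(w) \le R$ exit test, and extraction proceeds in the usual nondecreasing order until $\delta(v)$ is set to $\rho(\pi(i), v) = d_i(v)$.

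For vertices $v$ with $d_i(v) > R$, any relaxation in iteration~$i$ would produce a $\delta(v)$ of size at least $\rho(\pi(i), v) \ge d_i(v) > R$, so $v$ is never extracted as an interior loop vertex and the ``otherwise'' branch of~\eqref{eq:delta-i} is preserved. The main obstacle of the argument, and the one thing that goes beyond a textbook Dijkstra analysis, is the interaction between the persisted $\delta$ values carried over from earlier iterations and the fresh shortest-path computation from $\pi(i)$; the structural claim about shortest paths is precisely the tool that tames this interaction, since it says that persistence only ever blocks propagations along edges that could not have improved $d_i$ anyway.
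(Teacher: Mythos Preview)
Your argument is essentially the paper's own sketch: induction on $i$, the same case split on whether $\min_{j\le i-1}\rho(\pi(j),v)$ already equals $d_i(v)$, and in the interesting case the identical structural claim---every vertex on a shortest $\pi(i)$-to-$v$ path also has $d_{i-1}(u)>\rho(\pi(i),u)$---proved by the same triangle-inequality contradiction. Your write-up is a bit more explicit than the paper's about why stale $\delta$-values cannot block the relevant relaxations, but the underlying idea and level of rigor match.
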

\begin{proof}[Sketch of a proof]
Proof by induction on $i$. When $i=1$, and as long $\delta(w)\le R$ in the \emph{While} loop of line~\ref{ln:while-w},
the algorithm behaves \emph{exactly} as Dijkstra's algorithm and hence~\eqref{eq:delta-i} is true for $i=1$.

Assume inductively that~\eqref{eq:delta-i} is correct for $i-1$. 
If $ \min_{j\le i-1} \rho(\pi(j),v) \le \rho(\pi(i),v)$, then clearly the $i$-th iteration will not change $\delta(v)$, and by the inductiion hypothesis we are done.

Assume now that $ \min_{j\le i-1} \rho(\pi(j),v) > \rho(\pi(i),v)$, and $\rho(\pi(i),v)\le R$.
Let $\pi(i)=v_0,v_1,\ldots, v_\ell=v$ be a shortest-path between $\pi(i)$ and $v$. We claim that for every $t\in\{1,\ldots, \ell\}$,
$ \min_{j\le i-1} \rho(\pi(j),v_t) > \rho(\pi(i),v_t)$, since otherwise we had
\[ \min_{j\le i-1} \rho(\pi(j),v) \le \min_{j\le i-1} \rho(\pi(j),v_t) +\rho(v_t,v) \le 
\rho(\pi(i),v_t)+\rho(v_t,v)=\rho(\pi(i),v). \]
Hence all the edges along the  path $\pi(i)=v_0,\ldots,v_\ell=v$ will be relaxed in the $i$-th iteration, and so in the end of the $i$-th iteration,
$\delta(v)=\rho(\pi(i),v)$.
\end{proof}

\begin{proof}[Proof of Theorem~\ref{thm:fast-ckr}]
We first prove the correctness of Algorithm~\ref{alg:graphckr}, i.e., that $P\left(v\right) = \min \left\{ i :\ \rho(\pi(i),v)\le R\right\}$ for every $v\in V$.
Let $i_0=\min \left\{ i :\ \rho(\pi(i),v)\le R \right\}$. This means that $\min_{j<i_0} \rho(\pi(j),v) > R \ge \rho(\pi(i_0),v)$. By Lemma~\ref{lem:graph-ckr-is-honest}  at the beginning of the $i_0$-th iteration, $\delta(v)=\infty$, and hence $P(v)=0$ and by the end of the $(i_0)$-th iteration, $\delta(v)=\rho(\pi(i_0),v)$, and necessarily $P(v)=i_0$. Note that once $P(v)$ is set to a non-zero value, its value will not change.

We next bound the running time. we will show that every vertex is inserted into the queue $O(\log n)$ times in expectation, and
every edge $(u,v)$ of $G$ undergoes  $O\left(\log n\right)$ relaxations in expectation. 
Consider the non-increasing sequence $a_i=\min_{j\leq i}\rho\left(\pi\left(j\right), v\right)$. In the $i$-th iteration, $\delta(v)$ decreases if and only if
$a_{i-1}> a_i$. Note that $a_{i-1}>a_i$ means that $\rho\left(\pi\left(i\right),v\right)$ is the minimum among $\left\{\left. \rho\left(\pi\left(j\right),v\right)\ \right|\ j\leq i\right\}$, and the probability (over $\pi$) for this to happen is at most $1/i$.
By linearity of the expectation, the expected number of rounds of the $i$-loop where $\delta(v)$ decreases
(and hence $v$ is inserted into the queue) is at most 
\[ \sum^{n}_{i=1}\frac{1}{i} \le 1+ \ln n. \]
Furthermore, by another application of the linearity of expectation, 
the expected number of edge relaxations is at most
\[O\Bigl(\sum_{v\in V}\ln n\cdot \deg\left(v\right)\Bigr) = O\left(m\log n\right).\]

Using Fibonacci heaps~\cite{Fredman87} or Brodal's priority queues~\cite{Brodal}, 
the total running time of Algorithm~\ref{alg:graphckr} is $O(r+s\log n)$, where
$r$ is the number of relaxations, and $s$ is the number of ``insert" and ``extract minimum" operations. In our cases $\mathbb E[r]=O(m\log n)$, and $\mathbb E[s]=O(n\log n)$. Therefore the total expected running time of Algorithm~\ref{alg:graphckr}
is $O(m \log n +n \log^2 n)$.
\end{proof}

\section{Hierarchical Partitions}\label{section:optimization}

% \subsection{Dispensing with the aspect ration in the Hierarchical partition}\label{subsection:aspectratio}

In this section we explain how to dispense with the $O\left(\log \Phi\right)$ factor in the na\"{\i}ve implementation of the hierarchical partitions,  and replace it with $O\left(\log n\right)$.   
The method being used is standard. Similar arguments appeared previously, \emph{e.g.}, in~\cite{bartal96,harpeledmendel06,mendelnaor07,mendel06}.
However, the context here is slightly different, and the designated time bound is ${O}\left(m\log^3 n\right)$, which is faster than the implementations we are aware of.
While the argument is relatively straightforward, a full description of it is tedious to write and read. Instead we only sketch the implementation here.
A complete description, including algorithmic implementation, appears in~\cite{mythesis}.

In a na\"{\i}ve implementation, the number of scales in which we sample CKR partitions  is $\Theta\left(\lg \Phi\right)$. 
This leads to $O((n\log^2 n+m \log n)\log\Phi)$ bound on the expected running time.
Here we develop an implementation having ${O}\left(m \log^3 n\right)$ expected running time. We define for each scale an appropriate quotient of the input graph. We then show that CKR partitions of  those substitutive graph metrics  retain the properties of CKR partitions on original metric. Using those quotients, not all scales need to be processed, and the total size of the quotient graphs in all processed scales is ${O}\left(m \lg n\right)$.

For $y,y'\subseteq X$, let $\rho\left(y,y'\right) = \min \left\{\left. \rho\left(x,x'\right)\ \right| x\in y, x'\in y'\right\}$.
Given a partition $Y$ of the space $(X,\rho)$ we define the quotient metric $\nu$ on $Y$ as
\[ \nu\left(y, y'\right)=\min \Bigl\{ \sum_{j=1}^{l}\rho\left(y_{j-1}, y_j\right)\ : \ y_0,\ldots,y_l\in Y, y_0=y, y_l=y' \Bigr\} . \]

\begin{definition}\label{def:scalequot}
A space $\left(Y,\nu\right)$ is called \emph{$\Delta$-bounded quotient} of an $n$-point metric space $\left(X,\rho\right)$ if $Y$ is a $\Delta$-bounded partition of $X$, $\nu$ is {a} quotient metric on $Y$,  
and for every $x\in X$,  $B_\rho(x,\Delta/n)\subseteq Y(x)$.
\end{definition}
Note that a $\Delta$-bounded quotient of $n$-point metric space exists: define a relation $x\sim x'$ if $\rho(x,x')\le \Delta/n$, and take the transitive closure.
The quotient subsets are the equivalence classes, and by the triangle inequality, the diameter of those equivalence classes is at most $\Delta$.

The following lemma follows easily from Lemma~\ref{lem:improvedCKR}, see the proof of~\cite[Lemma~5]{MN06-arxiv}.
\begin{lemma}\label{lem:aspectratio}
Fix $\Delta>0$, and let $\left(Y,\nu\right)$ be a $\frac{\Delta}{2}$-bounded quotient of $\left(X,\rho\right)$. Let $\sigma:X\to Y$ be the natural projection, assigning each vertex $x\in X$ to its cluster $Y\left(x\right)$.
Let $ L$ be a $(\Delta/2)$-bounded CKR partition of $Y$. 

Let $ P$ be the pullback of $ L$ under $\sigma$, \emph{i.e.}, $ P=\left\{\left.\sigma^{-1}\left(A\right)\ \right|\ A\in  L\right\}$. Then $P$ is a $\Delta$-bounded partition of $X$ such
that for every $0<t\le\Delta/16$ and every $x\in X$,
\begin{equation} \label{eq:aspectratioCKR2}
\Pr\left[B_\rho \left(x,t\right)\subseteq 
P\left(x\right)\right] 
\ge 
\left(\frac{\left|B_\rho\left(x,\Delta/16\right)\right|}{\left|B_\rho\left(x,\Delta\right)\right|}\right)^{\frac{32t}{\Delta}}.
\end{equation}
and furthermore, if $t\le \Delta/2n$,
then
\begin{equation} \label{eq:prob=1}
\Pr\left[B_\rho \left({x},t\right)\subseteq 
P{\left(x\right)}\right] 
= 1.
\end{equation}
\end{lemma}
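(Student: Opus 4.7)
The lemma asserts three things: that $P$ is $\Delta$-bounded, the padding bound~\eqref{eq:aspectratioCKR2}, and the deterministic inclusion~\eqref{eq:prob=1}. I will focus on the latter two, since the boundedness follows from a short calculation combining the $\nu$-diameter $\le \Delta/2$ of each CKR block of $L$ with the $\rho$-diameter $\le \Delta/2$ of each cluster. Claim~\eqref{eq:prob=1} is immediate: if $t\le \Delta/(2n)$, then the definition of a $(\Delta/2)$-bounded quotient gives $B_\rho(x,t)\subseteq B_\rho(x,\Delta/(2n))\subseteq Y(x)=\sigma^{-1}(\sigma(x))$, and since $\sigma(x)\in L(\sigma(x))$ always, $B_\rho(x,t)\subseteq \sigma^{-1}(L(\sigma(x)))=P(x)$ with probability $1$.

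For~\eqref{eq:aspectratioCKR2}, the strategy is to apply Lemma~\ref{lem:improvedCKR} to the quotient $(Y,\nu)$ at scale $\Delta/2$ and lift the bound to $(X,\rho)$. The projection $\sigma$ is $1$-Lipschitz from $\rho$ to $\nu$: the single-edge quotient path from $\sigma(u)$ to $\sigma(v)$ has weight $\rho(\sigma(u),\sigma(v))\le\rho(u,v)$, so $\nu(\sigma(u),\sigma(v))\le\rho(u,v)$. Consequently $\{B_\nu(\sigma(x),t)\subseteq L(\sigma(x))\}$ forces $B_\rho(x,t)\subseteq \sigma^{-1}(L(\sigma(x)))=P(x)$. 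Combining with Lemma~\ref{lem:improvedCKR} applied to $(Y,\nu)$ at scale $\Delta/2$ (valid since $t\le(\Delta/2)/8=\Delta/16$) yields
\[
\Pr[B_\rho(x,t)\subseteq P(x)]\ \ge\ \left(\frac{|B_\nu(\sigma(x),\Delta/16)|}{|B_\nu(\sigma(x),\Delta/2)|}\right)^{32t/\Delta}.
\]

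The remaining step, and what I anticipate as the main obstacle, is to dominate the quotient aspect ratio by the original,
\[
\frac{|B_\nu(\sigma(x),\Delta/16)|}{|B_\nu(\sigma(x),\Delta/2)|}\ \ge\ \frac{|B_\rho(x,\Delta/16)|}{|B_\rho(x,\Delta)|}.
\]
The plan is to pass through the pulled-back sets $\sigma^{-1}(B_\nu(\sigma(x),r))$, which contain $B_\rho(x,r)$ by $1$-Lipschitzness, and to exploit the quotient condition $B_\rho(x,\Delta/n)\subseteq Y(x)$: it forces the minimum inter-cluster $\rho$-distance above $\Delta/(2n)$, so any $\nu$-path of total weight at most $\Delta/2$ has fewer than $n$ edges. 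The delicate point is the denominator comparison: a cluster can be $\nu$-close to $\sigma(x)$ via a long quotient path while all its points are $\rho$-far from $x$, so one must use the path-length cap together with the monotone inclusion $B_\nu(\sigma(x),\Delta/16)\subseteq B_\nu(\sigma(x),\Delta/2)$ to cancel cluster-size factors and close the bookkeeping. Once this ratio comparison is in place, raising both sides to the positive power $32t/\Delta$ and substituting back yields~\eqref{eq:aspectratioCKR2}.
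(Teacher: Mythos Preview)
Your overall plan is the one the paper points to: apply Lemma~\ref{lem:improvedCKR} to the quotient $(Y,\nu)$ at scale $\Delta/2$ and then transfer the resulting bound back to $(X,\rho)$. The transfer step you isolate is exactly the right place to worry, and the target inequality you write down,
\[
\frac{|B_\nu(\sigma(x),\Delta/16)|}{|B_\nu(\sigma(x),\Delta/2)|}\ \ge\ \frac{|B_\rho(x,\Delta/16)|}{|B_\rho(x,\Delta)|},
\]
is false in general, so the ``cluster-size bookkeeping'' you allude to cannot close it. Take $\Delta=16$ and the graph on $\{x,a_1,\dots,a_k,c_1,\dots,c_{20}\}$ in which each $a_i$ is joined to $x$ by an edge of tiny weight $\varepsilon$, and $x,c_1,c_2,\dots$ is a path with unit-weight edges. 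With clusters $y_0=\{x,a_1,\dots,a_k\}$ and $y_j=\{c_j\}$ (a valid $(\Delta/2)$-bounded quotient once $n=k+21$ is moderately large), one has $|B_\nu(y_0,1)|=2$, $|B_\nu(y_0,8)|=9$, but $|B_\rho(x,1)|=k+2$ and $|B_\rho(x,16)|=k+17$; for large $k$ the $\rho$-ratio tends to $1$ while the $\nu$-ratio stays at $2/9$.

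The obstruction is structural rather than technical: a uniform random permutation of $Y$ weights every cluster equally, so the CKR analysis on $(Y,\nu)$ cannot see the mass of the $a_i$'s packed into $y_0$. In fact, for $t=1$ and $k=1000$ in the example above, a direct computation over $R\in[2,4]$ gives $\Pr[B_\rho(x,1)\subseteq P(x)]=\tfrac{1}{2}\bigl(\tfrac34+\tfrac45\bigr)=\tfrac{31}{40}$, while~\eqref{eq:aspectratioCKR2} would demand at least $(1002/1017)^2\approx 0.97$. So the inequality~\eqref{eq:aspectratioCKR2}, read literally for an \emph{arbitrary} $(\Delta/2)$-bounded quotient with a uniform CKR on $Y$, cannot hold; either an extra hypothesis on the quotient (as is available for the specific ultrametric-based quotients $G_{(\Delta)}$ constructed later in this section) is being used, or the intended bound is in terms of $\nu$-balls rather than $\rho$-balls. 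Before trying to complete your argument you should look at the proof of \cite[Lemma~5]{MN06-arxiv} to see which formulation is actually established there.
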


We define $G|_\Delta$ as the subgraph of $G$ with edges of weight at most $\Delta$ and no isolated vertices. 

\begin{definition} \label{lem:limitconnected}
Given a weighted graph $G=\left(X,E,\omega\right)$ and $\Delta> 0$. Define the graph $G|_\Delta=\left(X|_\Delta , E|_\Delta, \omega|_\Delta\right)$ as follows.
\begin{align*}
E|_\Delta &=\left\{\left(u,v\right)\in E\ :\ u\ne v, \text{ and }\omega\left(u,v\right)\leq \Delta\right\}, \\
X|_\Delta &=\left\{u\in X\ :\ \exists v\in X,\ \left(u,v\right)\in E|_\Delta \right\}, \text{ and }\\
\omega|_\Delta&=\omega|_{E|_\Delta}. 
\end{align*}
\end{definition}

\begin{lemma}\label{lem:irredges}
Given a weighted graph $G=\left(X,E,\omega\right)$, and $\Delta> 0$. 
Let $L$ be a $\Delta$-bounded CKR partition of $X|_\Delta$, using the metric induced by $G|_\Delta$.
Then $ P= L\cup \left\{\left\{v\right\}\ :\ v\in X\setminus (X|_\Delta)\right\}$ is a $\Delta$-bounded CKR partition of $X$, 
using the metric induced by $G$.
\end{lemma}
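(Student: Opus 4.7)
The plan is to couple the construction of $L$ with a direct execution of Algorithm~\ref{alg:ckrpar} on $(X,\rho)$, and verify that the two outputs agree almost surely. The coupling relies on two elementary metric facts. \textbf{(F1)} Every $v\in X\setminus X|_\Delta$ satisfies $\rho(v,u)>\Delta$ for every $u\ne v$: by definition every edge of $G$ incident to such $v$ has weight strictly greater than $\Delta$, and any path leaving $v$ must begin with one of those edges. \textbf{(F2)} For every $u\in X|_\Delta$ and every $v\in X$ with $\rho(u,v)\le \Delta$, one has $v\in X|_\Delta$ and $\rho(u,v)=\rho_\Delta(u,v)$, where $\rho_\Delta$ is the shortest-path metric of $G|_\Delta$. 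Indeed, a shortest $u$-$v$ path in $G$ of total weight at most $\Delta$ cannot contain any single edge of weight exceeding $\Delta$, so all of its edges lie in $E|_\Delta$ and every vertex on it is incident to such an edge, hence belongs to $X|_\Delta$.

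Next I will couple the randomness. Sample a uniform random permutation $\pi$ of $X$ and a uniform random $R\in[\Delta/4,\Delta/2]$, and let $\pi'$ be the permutation of $X|_\Delta$ obtained by restricting $\pi$ to $X|_\Delta$. Marginally $\pi'$ is uniform on permutations of $X|_\Delta$, so $(\pi',R)$ is valid randomness for sampling $L$ via Algorithm~\ref{alg:ckrpar} on $(X|_\Delta,\rho_\Delta)$. In parallel, run Algorithm~\ref{alg:ckrpar} on $(X,\rho)$ with $(\pi,R)$, producing a partition $\widetilde P$. It will suffice to show $\widetilde P = L\cup\{\{v\}:v\in X\setminus X|_\Delta\}=P$; since $\widetilde P$ is by construction a $\Delta$-bounded CKR partition of $(X,\rho)$, this gives the lemma.

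By (F1) and $R\le\Delta/2<\Delta$, for every $v\in X\setminus X|_\Delta$ the ball $B_\rho(\pi(i),R)$ contains $v$ only in the iteration where $\pi(i)=v$; hence $v$ is assigned to its own singleton block in $\widetilde P$, and iterations whose centers lie outside $X|_\Delta$ never absorb any vertex of $X|_\Delta$. For iterations with $\pi(i)\in X|_\Delta$, (F2) (applied with $R\le \Delta/2\le\Delta$) yields $B_\rho(\pi(i),R)\subseteq X|_\Delta$ and $B_\rho(\pi(i),R)=B_{\rho_\Delta}(\pi(i),R)$. Therefore the execution of Algorithm~\ref{alg:ckrpar} on $(X,\rho)$, when restricted to $X|_\Delta$, visits the centers in exactly the order $\pi'$ and uses the very same balls it would use on $(X|_\Delta,\rho_\Delta)$ with $(\pi',R)$. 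The induced clustering on $X|_\Delta$ is therefore $L$, so $\widetilde P=P$, as desired.

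The argument is essentially mechanical; the only minor subtlety is (F2), where one must ensure that every intermediate vertex of a shortest path of total weight at most $\Delta$ inherits an edge-of-weight-$\le\Delta$ witness for membership in $X|_\Delta$. This is immediate from the observation that no single edge on such a path can exceed weight $\Delta$, so the path lies entirely in $G|_\Delta$.
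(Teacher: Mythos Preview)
Your proof is correct and follows essentially the same route as the paper's: both rest on the observation that isolated vertices of $G|_\Delta$ form singleton blocks, and that balls of radius at most $R\le\Delta/2$ in $(X,\rho)$ coincide with balls in $(X|_\Delta,\rho_\Delta)$ because no edge of weight exceeding $\Delta$ can lie on a path of total weight at most $\Delta$. The paper compresses this into one sentence about Dijkstra never using heavy edges, while you spell out facts (F1), (F2) and an explicit coupling of the permutations; the added rigor is welcome but the underlying argument is the same.
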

\begin{proof}
Let $\rho$ be the shortest-path metric on $G$.
Observe that when computing a $\Delta$-bounded CKR partition of $\left(X, \rho\right)$ no edge of weight larger than $\Delta$ is ``used" by the Dijkstra's algorithm for computing the balls, and therefore discarding them does not change the behavior of the algorithm. Also, for each $v\in X\setminus X|_\Delta$, $B_\rho\left(v,\Delta\right)= \left\{v\right\}$, \emph{i.e.}, in any $\Delta$-bounded CKR partition of $X$, $v$ will appear in a singleton subset.
\end{proof}

Lemma~\ref{lem:irredges} and Lemma~\ref{lem:aspectratio} form the basis for dispensing with the dependence on the spread in the construction time. 
We next sketch the scheme we use.
Denote the input graph $G=(X,E,\omega)$, $|X|=n$, $|E|=m$, and let $\rho$ be the graph metric on $G$.

We first
construct an ultrametric $\nu$ on $V$, represented by an ultrametric tree $H =\left(T, \Gamma\right)$ such that for every $u,v\in V$, $\rho\left(u,v\right)\le \nu\left(u,v\right)\le n \rho\left(u,v\right)$.  $H$ can be constructed in $O(m +n \log n)$ time using minimum spanning tree procedure, see~\cite[Section~3.2]{harpeledmendel06}.
For a given
 $\Delta\geq 0$, and a leaf $v\in T$, denote by  $\sigma_{\Delta}\left(v\right)$  the highest ancestor $u$ of $v$ for which $\Gamma\left(u\right)\leq \frac{\Delta}{2n}$. 
Using the level-ancestor data structure (cf.~\cite{Bender2004}) the tree $T$ can be preprocessed in $O(n)$ time
such that queries for $\sigma_\Delta(v)$ (given $\Delta$, and $v$) are answered in $O(\log n)$ time. See~\cite[Section~3.5]{harpeledmendel06} for a similar supporting data structure.
 
% In Algorithm~\ref{alg:initrams} ({\bf Init}) we explain how to construct $H$ and its
% supportive data structures in $O\left(m\lg n\right)$ time.

Given $\Delta>0$, define the  weighted graph $G_{(\Delta)} $ as follows.
$G_{(\Delta)}=\left(X_{(\Delta)},E_{(\Delta)},\omega_{(\Delta)} \right)$ where,
\begin{align*}
X_{(\Delta)} &=\left\{\sigma_{\Delta}\left(v\right) :\ v\in X\right\},
\\ E_{(\Delta)} &= \left\{\left(\sigma_{\Delta}\left(u\right),\sigma_{\Delta}\left(v\right)\right) :\ \left(u,v\right)\in E, \sigma_{\Delta}\left(u\right) \neq \sigma_{\Delta}\left(v\right) \right\},
\\ \omega_{(\Delta)} \left(u,v\right) &=\min\left\{ \omega\left(w,z\right) :\ \sigma_{\Delta}\left(w\right)=u,\; \sigma_{\Delta}\left(z\right)=v \right\}.
\end{align*}

Let $\rho_{(\Delta)}$ be the shortest-path metric on $G_{(\Delta)}$.
Then, directly from the definitions, $\left(X_{(\Delta)} ,\rho_{(\Delta)}\right)$ is a $\frac{\Delta}{2}$-bounded quotient of $\left(X, \rho\right)$.

For an integer $j\geq -1$  denote
$G_j=\left(V_j,E_j,\omega_j\right)$ where $G_j=\left(G_{(8^j)}\right)|_{8^j/2}$.
The following lemma gives an upper bound on the total size of the graphs $G_j$.
\begin{lemma} \label{lem:G_j}
\[ \sum_{j\ge -1 }\left(\left|V_j\right|+\left|E_j\right|\right) = O\left(m\lg n\right).\]
\end{lemma}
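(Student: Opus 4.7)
The plan is to charge each contribution to $\sum_j (|V_j| + |E_j|)$ to an original edge of $G$ and show that each such edge is charged at most $O(\log n)$ times.

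First I would reduce the vertex count to the edge count: by construction $G_j = (G_{(8^j)})|_{8^j/2}$ has no isolated vertices, so $|V_j| \le 2|E_j|$ and it suffices to bound $\sum_j |E_j|$. Next, I would bound $|E_j|$ by the number of original edges that ``witness'' it. An edge $(u,v)\in E$ contributes a quotient edge to $G_j$ only if (a) $\sigma_{8^j}(u)\ne \sigma_{8^j}(v)$ (otherwise $u,v$ collapse in the quotient), and (b) $\omega(u,v)\le 8^j/2$ (so that the resulting quotient edge weight, which is the minimum over witnesses, can be at most $8^j/2$ and thus survive the $|_{8^j/2}$ restriction). Since every quotient edge in $E_j$ must have at least one such witness in $E$, we have
\[
|E_j| \le \bigl|\{(u,v)\in E :\ \sigma_{8^j}(u)\ne \sigma_{8^j}(v)\text{ and } \omega(u,v)\le 8^j/2\}\bigr|.
\]

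The heart of the argument is to show that each fixed edge $(u,v)\in E$ lies in the above set for at most $O(\log n)$ values of $j$. Condition (b) yields $8^j \ge 2\omega(u,v)$. For condition (a), recall that $\sigma_{\Delta}(x)$ is the highest ancestor of $x$ in the ultrametric tree $H=(T,\Gamma)$ with label $\le \Delta/(2n)$; hence $\sigma_{8^j}(u) \ne \sigma_{8^j}(v)$ exactly when $\Gamma(\lca(u,v)) = \nu(u,v) > 8^j/(2n)$, i.e., $8^j < 2n\,\nu(u,v)$. Combining the two inequalities and using the distortion bound $\nu(u,v) \le n\,\rho(u,v) \le n\,\omega(u,v)$ gives
\[
2\,\omega(u,v) \le 8^j < 2n\,\nu(u,v) \le 2n^2\,\omega(u,v),
\]
so the number of admissible integers $j$ is at most $\log_8(n^2) = O(\log n)$.

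Finally I would swap the order of summation:
\[
\sum_{j\ge -1} |E_j| \le \sum_{(u,v)\in E} \bigl|\{j :\ (u,v)\text{ witnesses an edge in } E_j\}\bigr| \le m\cdot O(\log n),
\]
and together with $|V_j|\le 2|E_j|$ this yields the stated bound. I do not expect a real obstacle here; the only subtlety is remembering that $|E_j|$ counts quotient edges (not witnesses), which is handled cleanly because distinct witnesses of the same quotient edge only overcount, keeping the inequality in the safe direction.
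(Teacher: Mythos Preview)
Your argument is correct and follows essentially the same charging scheme as the paper: reduce to $\sum_j |E_j|$ via $|V_j|\le 2|E_j|$, charge each quotient edge in $E_j$ to an original edge of $G$, and show each original edge is charged in only $O(\log n)$ scales. The only cosmetic difference is in how the window of admissible $j$ is bounded: the paper derives $\omega(u,v)\in[8^j/(2n),\,1.5\cdot 8^j]$ directly (a width-$O(n)$ window), whereas you pass through the ultrametric inequality $\nu(u,v)\le n\,\omega(u,v)$ to get a width-$O(n^2)$ window; both give $O(\log n)$ scales and the conclusion is the same.
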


\begin{proof}
Fix $\left(u,v\right)\in E$ {and $j\geq -1$} such that $\left(\sigma_{8^j}\left(u\right), \sigma_{8^j}\left(v\right)\right)\in E_{(8^j)}$. By the definition of $E_{(8^j)}$, 
  $\sigma_{8^j}\left(u\right)\neq \sigma_{8^j}\left(v\right)$. By the definition of $\omega_{(8^j)}$,  $\omega\left(u,v\right)\geq \omega_{(8^j)}\left(u,v\right)\geq \frac{8^j}{2n}$. 
Also, $\left(\sigma_{8^j}\left(u\right), \sigma_{8^j}\left(v\right)\right)\in E_j$ if and only if $\omega_{(8^j)}\left(\sigma_{8^j}\left(u\right), \sigma_{8^j}\left(v\right)\right) \leq \frac{8^j}{2}$. 
So by the triangle inequality $\omega(u,v)\le {\omega_{(8^j)}(\sigma_{8^j}(u),\sigma_{8^j}(v)) +  8^j}$, and hence
$\omega(u,v)\le 1.5\cdot 8^j$.
That is, each edge of $G$ is represented in $G_j$ only when $\omega\left(u,v\right)\in \left[\frac{8^j}{2n}, 1.5\cdot {8^j}\right]$. A total of $O\left(\log n\right)$ scales. By definition, $G_j$ contains only non-isolated vertices, so $\forall j$, $\left|V_j\right|\leq 2 \left|E_j\right|$. 
\end{proof}

Let $\Processed=\{j\ge -1:\; V_j\ne\emptyset\}$. 
\begin{lemma}
The set of graphs $(G_j)_{j\in\Processed}$ can be constructed in $O(m \log ^2 n)$ expected time.%
\footnote{With a bit more care the running time can be improved to $O(m \log n)$. 
This improvement, however, will not improve the total construction time of the hierarchical partition.}
\end{lemma}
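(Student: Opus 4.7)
The plan is to build the quotient graphs scale-by-scale by exploiting two facts: each input edge $(u,v)\in E$ is relevant at only $O(\log n)$ scales (as noted in the proof of Lemma~\ref{lem:G_j}), and the projection $\sigma_{8^j}$ can be evaluated in $O(\log n)$ time per query using a level-ancestor structure on the ultrametric tree $H$.

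First, I would construct $H=(T,\Gamma)$ in $O(m+n\log n)$ time by the MST-based single-linkage procedure of \cite{harpeledmendel06}, and preprocess $T$ in $O(n)$ time so that any query $\sigma_\Delta(v)$ is answered in $O(\log n)$ time. Then, for each edge $(u,v)\in E$, enumerate the $O(\log n)$ scales $j$ with $\omega(u,v)\in [8^j/(2n),\,1.5\cdot 8^j]$, which is precisely the interval identified in the proof of Lemma~\ref{lem:G_j} as the only range where $(u,v)$ can project to an edge of $G_j$. For each such $j$, compute $u'=\sigma_{8^j}(u)$ and $v'=\sigma_{8^j}(v)$ with two level-ancestor queries; if $u'\neq v'$, insert (or update) the unordered pair $\{u',v'\}$ in a per-scale hash table, maintaining the minimum weight among all original edges whose projections coincide.

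After all edges have been processed, the hash table at scale $j$ stores every candidate edge of $G_{(8^j)}$ that descends from some $E$-edge of weight at most $1.5\cdot 8^j$; by the triangle-inequality argument used in the proof of Lemma~\ref{lem:G_j}, this includes every edge of $G_j$ together with its correct quotient weight $\omega_{(8^j)}$. Filtering to entries of weight at most $8^j/2$ yields $E_j$ with its weights $\omega_j$, and $V_j$ is read off as the set of endpoints present after filtering; the scales whose tables are non-empty after filtering form $\Processed$. The running time is dominated by the $O(m\log n)$ (edge, scale) pairs counted in Lemma~\ref{lem:G_j}, each costing $O(\log n)$ for the two level-ancestor queries plus $O(1)$ expected time for the hash operation, giving $O(m\log^2 n)$ expected time overall.

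The main delicate point is the insertion/filtering step: one must insert every original edge of weight up to $1.5\cdot 8^j$, not merely those of weight up to $8^j/2$, since the quotient weight $\omega_{(8^j)}$ is the minimum over preimages and a single $E$-edge of weight between $8^j/2$ and $1.5\cdot 8^j$ may still contribute to a quotient edge whose true weight, realized by a different preimage, lies below $8^j/2$. Once this enumeration is set up correctly, correctness is immediate from the definitions of $G_{(\Delta)}$ and $G|_\Delta$, and the stated running time follows by combining Lemma~\ref{lem:G_j} with the per-query cost of the level-ancestor data structure.
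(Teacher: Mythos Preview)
Your argument is correct and rests on the same two ingredients as the paper's proof: the $O(\log n)$-scales-per-edge bound from Lemma~\ref{lem:G_j} and $O(\log n)$-time level-ancestor queries for $\sigma_\Delta$. The paper merely organizes the work scale-first---sort the edges once, then for each $j_t\in\Processed$ scan a contiguous window $[i_L(t),i_R(t)]$ of the sorted list and project its endpoints---whereas you organize it edge-first with per-scale hash tables; the arithmetic is identical, and your explicit use of hashing cleanly accounts for the word ``expected'' in the statement.

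One small remark: your ``delicate point'' is overly cautious. If a quotient edge belongs to $E_j$ then its weight $\omega_{(8^j)}$ is at most $8^j/2$, and this minimum is \emph{attained} by some preimage $(w,z)\in E$ with $\omega(w,z)\le 8^j/2$. Hence enumerating only original edges of weight at most $8^j/2$ (rather than $1.5\cdot 8^j$) already recovers every edge of $E_j$ together with its correct quotient weight; heavier preimages that happen to project to the same pair contribute nothing. This does not affect correctness or the asymptotic bound, only the explanation.
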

\begin{proof}[Sketch of a proof]
First we sort the edges in $E=\{e_1,\ldots e_m\}$ in non increasing order.
Keep a ``sliding window" $[i_L(t),i_R(t)]$, $i_L(t),i_R(t)\in \{1,\ldots,m\}$,
$t\in\{1,\ldots,|\Processed|\}$, 
as follows: Let $j_1=\lceil \log _8 {\Phi}\rceil $. $i_L(1)=1$, $i_R(1)=\max\{i:\; \omega(e_i)\ge 8^{j_1}/2n\}$.
Assuming  $j_{t-1}$ is already defined, define 
$j_t=\max\{j<j_{t-1}:\; \exists i,\; 8^j \ge  \omega(e_i)\ge 8^j/2n\}$,
$i_L(t)=\min\{i:\ \omega(e_i)\le 8^{j_{t}}\}$, and $i_R(t)=\max\{i:\; \omega(e_i)\ge 8^{j_t}/2n\}$. Note that $\{j_t\}_t=\Processed$, and the definition gives $O(m)$ time algorithm for computing the sequences $(j_t)_t$, $(i_L(t))_t$, and $(i_R(t))_t$. Constructing $G_{j_t}$ can now be done in $O((i_R(t)-i_L(t)+1) \log n)$ time, by observing 
that the set of vertices is
\[ V_{j_t}=\{\sigma_{8^{j_t}}(u_i), \sigma_{8^{j_t}}(v_i):\; i\in \{i_L(t),\ldots, i_R(t)\},\; (u_i,v_i)=e_i\}, \]
and similarly the set of edges is
\[ E_{j_t}=\{(\sigma_{8^{j_t}}(u_i), \sigma_{8^{j_t}}(v_i)):\; i\in \{i_L(t),\ldots, i_R(t)\},\; (u_i,v_i)=e_i\}. \]
Another $\log n$ factor in the construction time comes from the $O(\log n)$ time needed for each query of the form ``$\sigma_\Delta(u)$". 
Since $i_R(t)-i_L(t)+1=|E_{j_t}|$, by Lemma~\ref{lem:G_j}, $\sum_t (i_R(t)-i_L(t)+1)=O(m \log n)$.
\end{proof}

Next, we sample $(8^{j_t}/2)$-bounded CKR partition $L_{j_t}$ for each $G_{j_t}$. By Lemma~\ref{lem:aspectratio}, $(L_{j_t})_t$ (implicitly) represents CKR partitions of $G$ in \emph{all} scales.
Using Theorem~\ref{thm:fast-ckr} and Lemma~\ref{lem:G_j} computing $(L_{j_t})_t$  is done in $O(m \log^3 n)$ expected time.

Hierarchical partitions have an $O(n)$ storage representation.
It is similar to an efficient ultrametric tree representation, such as the \emph{nettree}  in~\cite{harpeledmendel06}. Using a rooted tree $\mathcal P$ whose leaves correspond to the points of $X$, each internal vertex $u$ has at least two children, 
and is labeled with a (logarithm of) scale, $s(u)$. The $8^j$-bounded partition $P_j$ is now defined as follows: For $x\in X$, $P_j(x)$ is the 
highest ancestor $u$ of $x$ in $\mathcal P$ such that $s(u)\le j$. 
Since the tree $\mathcal P$ does not have vertices of degree 2, except maybe the root, its size is $O(n)$.

We are left to describe how to compute the the common refinement of the pullbacks of $(L_{j_t})_t$ as a hierarchical partition represented in the tree structure $\mathcal P$ of the previous paragraph.
This is done by top-down fashion as follows:

In the initialization step, $\mathcal P$ is created as a rooted star whose root, $r$ is labeled by $8^{j_1}$, and its leaves correspond to $\{\sigma_{8^{j_1}}(u): \; u\in X\}$.

Next, inductively assume that $\mathcal P$ is a hierarchical partition of $\{\sigma_{8^{j_{t-1}}}(u):\; u\in X\}$ corresponding to $\{L_{j_s}:\; s\le t-1\}$. We refine $\mathcal P$ to include $L_{j_t}$ as follows:
\begin{itemize}
\item Replace the leaves of $\mathcal P$: Each $\sigma_{8^{j_{t-1}}}(u)$ is replaced by 
\[\{\sigma_{8^{j_{t}}}(v):\; v\in X,\  \sigma_{8^{j_{t}}}(v) \text{ is a descendant of }\sigma_{8^{j_{t-1}}}(u) \} .\]
This step is done in $O(|V_{j_t}|)$ time by simply starting from an ``old leaf" $\sigma_{8^{j_{t-1}}}(u)$ as a vertex in $T$ and descending
in $T$ to level $8^{j_t}/2n$.

\item Next, incorporate $L_{j_t}$ into the hierarchical partition in a straightforward way:
Scan the leaves of $\mathcal P$, which are in $V_{j_t}$ grouped by their parents. Fixing such a parent $u$ whose children  $v_1,\ldots, v_\ell$ are all leaves, we partition $v_1,\ldots, v_\ell$ to subsets
$\{\{v_1,\ldots,v_\ell\}\cap C:\; C\in L_{j_t}\}$. For every such subset of size 2 or more we define
a new parent $w$ (which will be a child of $u$) with the label $8^{j_t}$.
\end{itemize}

Hence, the $t$-th iteration in the algorithm above is executed in $O(|V_{j_t}|)$ time,
so the total time for constructing the common refinement is $O(m \log n)$.

\section{Applications}\label{section:applications}

%\subsection{FRT embedding}

\begin{proof}[Proof of the first part of Theorem~\ref{thm:application-1}]
As observed in Section~\ref{section:preliminaries}, hierarchical partitions correspond
to ultrametrics. As  shown in~\cite{raotalwar04}, when the partition in every scale is a CKR partition, the resulting distribution over ultrametrics is a probabilistic embedding with $O(\log n)$ distortion.%
\footnote{Technically, in~\cite{raotalwar04} the hierarchical partition was built  differently: instead of taking a CKR partition of the whole space in every scale, and then the common refinement, at each scale they took many CKR partitions, one for each block of the partition of the previous scale. This technicality is inconsequential 
for probabilistic embeddings. However, for the construction of approximate distance oracles,
the approach of~\cite{raotalwar04} does not seem to work since it does not have stochastically independent partitions
in the different scales. See also~\cite{mendelnaor07}.}
The algorithm described in Section~\ref{section:optimization} samples a hierarchical partition (and hence an ultrametric) in $O(m \log^3 n)$ expected time.
\end{proof}

% \subsection{Picking the padded set }\label{subsection:paddedpoints}

%\subsection{Approximate distance oracles.}

\begin{proof}[Proof of the second part of Theorem~\ref{thm:application-1}]
A point $x\in X$ is called $\beta$-padded in hierarchical partition $\mathcal H=(P_{-1},\ldots,P_{\log \Phi})$,
if for every $j$, $B(x,\beta 8^j)\subset P_j(x)$. 

The main part of constructing $O(\beta^{-1})$-approximate distance oracle based on CKR partitions works as follows~\cite{mendelnaor07}: Set $X_0=X$, and iteratively on $i=0,1,\ldots$ do: Compute a hierarchical CKR partition $\mathcal H_i$ of $X_i$, and obtain an ultrametric $H_i$ from $\mathcal H_i$. Let $Y_i$ be a set of $\beta$-padded points in 
$\mathcal H_i$ that is found in Lemma~\ref{lem:ckr-hierarchical-part}. 
Set $X_{i+1}\coloneqq X_i\setminus Y_i$, $i\coloneqq i+1$ and repeat until $X_i=\emptyset$.
The set of ultrametrics $(H_{i})_i$, together with some supporting data-structures constitute the approximate distance oracle. 

By Lemma~\ref{lem:ckr-hierarchical-part}, $\mathbb E |Y_i|\ge |X_i|^{1-32\beta}$,
the number of iterations until $X_i=\emptyset$ is in expectation at most 
$O(\beta^{-1} n^{O(\beta)})$, and hence the total storage is as claimed.

There are two issues in the construction of $(H_i)_i$ that we have not covered yet:
First,  the task is to sample a hierarchical partition of $X_i$ which is only a subset of the vertices in the graph $G=(X,E,\omega)$.
This is rather easy to handle by adapting the algorithms in Section~\ref{section:CKRrandompartition} and Section~\ref{section:optimization} to work with subsets of the vertices.

The second issue is the computation of $\beta$-padded points.
The $\beta$-padded points of a (single) $\Delta$-bounded partition $P$ of a weighted graph $G=(V,E,\omega)$ can be computed as follows: Add a new vertex $s_0$.
For every edge $(u,v)\in E$ such that $P(u)\ne P(v)$, add an edge $(s_0,u)$ whose weight is $\omega(u,v)$.
Execute Dijkstra's shortest path algorithm from $s_0$, and delete all vertices at distance at most $\beta\Delta$ from $s_0$. This can be implemented in $O(m +n\log n)$ time. Note that in the hierarchical partition if a point is not in $V_j$ then it is padded at scale $8^j$. Hence in order to compute a $\beta$ padded point set in hierarchical partition,
for every $t$, we  cross out the points which are not $2\beta$-padded in $L_{j_t}$. The remained points are $\beta$-padded in the pullbacks of $(L_{j_t})_t$ (as follows from Lemma~\ref{lem:aspectratio}) and hence also in the hierarchical partition. When implemented carefully, this can be done
on every graph $G_j$ in $O(|E_j|+|V_j|\log n)$, and by Lemma~\ref{lem:G_j}, in a total $O(m \log ^2 n)$ time.
\end{proof}

A $t$-spanner of a weighted graph $G=(V,E,\omega)$, is a subset of the edges $E'\subset E$  such that 
the shortest-path metric on $(V,E',\omega|_{E'})$ is at most $t$ times the shortest-path metric on $G$. We need the following result.

\begin{theorem}[\cite{baswanasen03}]\label{thm:spanner}
 Let $G=\left(V,E,w\right)$ be a weighted graph with $n$ vertices and $m$ edges, and let $k\geq 1$ be an integer. A $(2k-1)$-spanner of
with $O\left(kn^{1+1/k}\right)$ edges can be computed in $O\left(km\right)$ expected time.
\end{theorem}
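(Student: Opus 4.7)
The plan is to implement the Baswana--Sen clustering algorithm, which proceeds in $k$ phases. Initialize $\mathcal{C}_0$ as the collection of singleton clusters $\{\{v\}:v\in V\}$ and let the spanner edge set $E_S$ be empty. In phase $i\in\{1,\ldots,k-1\}$, form a sampled subfamily $\mathcal{R}_i\subseteq\mathcal{C}_{i-1}$ by independently including each cluster with probability $p=n^{-1/k}$. For each vertex $v$, scan its neighboring clusters in $\mathcal{C}_{i-1}$ in non-decreasing order of edge weight. If no neighboring cluster is sampled, add to $E_S$ the minimum-weight edge from $v$ to each neighboring cluster and remove $v$ from the clustering; otherwise, let $e^*$ be the lightest edge from $v$ to a sampled cluster $c^*\in\mathcal{R}_i$, add $e^*$ to $E_S$ together with the minimum-weight edge from $v$ to every unsampled cluster connected to $v$ by an edge of weight less than $\omega(e^*)$, and attach $v$ to the cluster grown around $c^*$ in $\mathcal{C}_i$. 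In a concluding phase, for every remaining vertex $v$ and each cluster of $\mathcal{C}_{k-1}$ adjacent to $v$, add the minimum-weight connecting edge to $E_S$.

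For the size bound, I would analyze each phase separately. Fix a vertex $v$ and order its incident clusters in $\mathcal{C}_{i-1}$ by increasing weight of the lightest edge from $v$. Since each cluster is sampled independently with probability $p$, the expected number of edges $v$ contributes in phase $i$ is at most $1+\mathbb{E}[T]=O(n^{1/k})$, where $T$ is the position of the first sampled cluster in this list, a geometric random variable with success probability $p$. Summing over vertices and phases gives $O(kn^{1+1/k})$ edges in expectation. For the running time, each phase can be carried out in $O(m)$ time by maintaining, for each vertex, its incident edges grouped and sorted per neighboring cluster; cluster merges amortize well across phases, yielding a total expected running time of $O(km)$.

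The main obstacle is the stretch analysis. I would prove by induction on $i$ the invariant that for every cluster $c\in\mathcal{C}_i$ and every $v\in c$, the path in $E_S$ from $v$ to the center of $c$ consists of at most $i$ edges whose weights are non-increasing toward the center, with the heaviest being the edge used when $v$ was last re-attached. Given the invariant, I analyze an edge $(u,v)\in E\setminus E_S$ by locating the earliest phase $j$ at which the algorithm ``severs'' the connection between $v$ and $u$'s evolving cluster: either $v$ leaves the clustering in phase $j$, in which case the minimum-weight edge from $v$ to $u$'s phase-$(j-1)$ cluster is added with weight at most $\omega(u,v)$, so concatenating it with the intra-cluster spanner path at $u$ produces a path of weight at most $(2k-1)\omega(u,v)$; or both endpoints survive to the concluding phase in clusters whose intra-cluster paths can, by the invariant, be bounded in terms of $\omega(u,v)$, and the single cluster-to-cluster bridging edge added in that phase, together with the two intra-cluster paths, yields a $(u,v)$-path of weight at most $2(k-1)\omega(u,v)+\omega(u,v)=(2k-1)\omega(u,v)$. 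Pinning down the invariant so that the intra-cluster path weights are indeed bounded by the appropriate multiple of $\omega(u,v)$ in each subcase is the delicate part of the argument.
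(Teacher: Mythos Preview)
The paper does not prove this statement at all: Theorem~\ref{thm:spanner} is quoted verbatim from Baswana and Sen~\cite{baswanasen03} and used as a black box in the proof of Theorem~\ref{thm:hierachical-part}. There is therefore nothing in the paper to compare your proposal against.

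That said, what you have written is a faithful outline of the Baswana--Sen construction and analysis, so in that sense your proposal is ``correct'' relative to the original source. A couple of points where your sketch is loose: your running-time paragraph mentions keeping edges ``grouped and sorted per neighboring cluster,'' but sorting would cost an extra logarithmic factor; the $O(km)$ bound in~\cite{baswanasen03} relies on avoiding any sorting via hashing/bucketing and on discarding inter-cluster edges after each phase so that the total work telescopes. For the stretch bound, your stated invariant (at most $i$ edges from $v$ to its cluster center, with weights non-increasing toward the center) is not quite what Baswana--Sen prove; the actual invariant bounds the in-cluster path weight by $i$ times the weight of the heaviest edge on the path from $v$ outward, and the ``non-increasing toward the center'' clause is not true in general. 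You have correctly flagged this as the delicate part, but as written your invariant would not close the argument.
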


\begin{proof}[Proof of Theorem~\ref{thm:hierachical-part}]
By Theorem~\ref{thm:spanner},
given an $n$-point metric space $(X,\rho)$,
a {$5$}-spanner $H$ of $(X,\rho)$ with $O(n^{4/3})$ edges can be constructed in $O(n^2)$ time. 
We next apply the second part of Theorem~\ref{thm:application-1} whose running time is $o(n^2)$.
\end{proof}

\bibliographystyle{plainurl}
\bibliography{ckr}
\end{document}